\documentclass[acmlarge,nonacm]{acmart}

\usepackage{amsmath,amsthm,amssymb}
\usepackage{booktabs}
\usepackage{xcolor}
\usepackage{colortbl}
\usepackage{pgfplots}
\usepackage{tikz}
\usepackage{tcolorbox}
\pgfplotsset{compat=1.17}
\usepgfplotslibrary{fillbetween}

\newtheorem{theorem}{Theorem}
\newtheorem{corollary}[theorem]{Corollary}

\theoremstyle{definition}
\newtheorem{definition}{Definition}

\theoremstyle{remark}
\newtheorem{remark}{Remark}

\AtBeginDocument{%
  }

\copyrightyear{2026}
\acmYear{2026}
\acmConference[FAccT '26]{The 2026 ACM Conference on Fairness, Accountability, and Transparency}{June 25--28, 2026}{Montreal, QC, Canada}
\acmDOI{10.1145/3805689.3812215}

\begin{document}
\title[Likelihood Ratio Wall]{The Likelihood Ratio Wall: Structural Limits on Accurate Risk Assessment for Rare Violence}

\titlenote{Accepted to the 2026 ACM Conference on Fairness, Accountability, and Transparency (FAccT '26). Please cite the published version: \url{https://doi.org/10.1145/3805689.3812215}}

\author{Marco Pollanen}
\email{marcopollanen@trentu.ca}
\orcid{0000-0001-5356-1889}
\affiliation{%
  \institution{Trent University}
  \city{Peterborough}
  \state{ON}
  \country{Canada}
}

\renewcommand{\shortauthors}{Pollanen}

\begin{abstract}
Pretrial risk assessment tools are used on over one million U.S.\ defendants each year, yet their use for predicting rare violent re-offense faces a basic statistical barrier. We derive a universal precision bound---the \emph{Likelihood Ratio Wall}---showing that when violent re-arrest rates are low (2--5\%), achieving even a 50\% hit rate among people labeled ``high risk'' (positive predictive value, or PPV) would require tools far more discriminative than current instruments appear to be. For rare outcomes, a tool can have respectable-looking performance metrics and still be wrong most of the time it flags someone as ``high risk for violence.'' We show that post-hoc score recalibration cannot solve this problem because it does not improve the tool's underlying ability to separate true positives from false positives. We further prove a \emph{Surveillance Ceiling}: when over-policing inflates recorded ``risk factors'' among those who would not re-offend, the maximum achievable precision is structurally lower for over-policed groups, even at equal offense rates. We translate these results into the \emph{Number Needed to Detain} (how many people must be detained to prevent one violent offense), and propose that risk reports should communicate this uncertainty explicitly. Our findings suggest that for rare violent outcomes, debates about fairness metrics alone are incomplete: under current data regimes, the available features may not support high-confidence individualized detention decisions.
\end{abstract}

\begin{CCSXML}
<ccs2012>
   <concept>
       <concept_id>10003456.10003457.10003490.10003507.10003509</concept_id>
       <concept_desc>Social and professional topics~Surveillance</concept_desc>
       <concept_significance>500</concept_significance>
   </concept>
   <concept>
       <concept_id>10010147.10010257</concept_id>
       <concept_desc>Computing methodologies~Machine learning</concept_desc>
       <concept_significance>300</concept_significance>
   </concept>
   <concept>
       <concept_id>10003456.10003457.10003527</concept_id>
       <concept_desc>Social and professional topics~Computing / technology policy</concept_desc>
       <concept_significance>300</concept_significance>
   </concept>
</ccs2012>
\end{CCSXML}

\ccsdesc[500]{Social and professional topics~Surveillance}
\ccsdesc[300]{Computing methodologies~Machine learning}
\ccsdesc[300]{Social and professional topics~Computing / technology policy}

\keywords{algorithmic fairness, pretrial risk assessment, surveillance, decision-making under uncertainty, positive predictive value, likelihood ratio, impossibility results}

\maketitle

\section{Introduction}

Algorithmic risk assessment tools are standard in U.S.\ pretrial proceedings, with over one million defendants evaluated annually \cite{Sawyer2023,DeMichele2018}.
Tools like the Public Safety Assessment (PSA) and COMPAS classify defendants into risk categories, and ``high-risk'' flags influence detention decisions.
The implicit claim is that flagged defendants are substantially more likely to commit violent offenses if released.
This paper examines whether that claim can be sustained mathematically.

The key question is simple: \emph{if a tool labels someone ``high risk,'' how often is that label actually correct?} In statistics, this quantity is called the \textbf{positive predictive value} (PPV). We prove a universal bound: any classifier achieving PPV of at least $\alpha$ when the base rate (outcome frequency) is $\pi$ must have a \textbf{likelihood ratio} (LR), measuring how much more often the tool flags true positives than false positives, satisfying

\[
\text{LR} = \frac{\Pr(\text{flag} \mid \text{positive})}{\Pr(\text{flag} \mid \text{negative})} \geq \frac{\alpha(1-\pi)}{(1-\alpha)\pi}.
\]
 The ``hit rate'' of a high-risk flag thus depends on two things: how rare the outcome is and how strongly the flag distinguishes true positives from false positives.

\paragraph{Definitions and scope.}
We focus on \emph{violent pretrial re-arrest}, defined as arrest for a violent felony offense within the pretrial release period (typically 6--24 months). Base rates range from 2\% to 5\% in most jurisdictions \cite{DeMichele2018,Lowenkamp2013}; in reform-oriented jurisdictions like New York City, rates are often below 2\% \cite{NYC2021}. We distinguish this from broader outcomes such as ``any re-arrest'' (base rates 20--30\%), for which the precision constraints we identify are substantially less severe. This distinction matters because many published evaluations report performance for more common outcomes, which can make tools appear more useful for detention decisions than they are for rare violent outcomes.

At a 3\% base rate, a tool needs $\text{LR} \geq 32$ for a ``high-risk'' violence flag to be correct even half the time. Published evidence suggests current tools achieve $\text{LR} \approx 2\text{--}6$ at operational thresholds (we substantiate this in Section~6). A 50\% threshold corresponds to what lawyers call the \textbf{preponderance of the evidence} standard, meaning, roughly, ``more likely than not.'' We use that benchmark as an analogy, not as a claim about current constitutional doctrine \cite{Addington1979}; we discuss the justification for this choice, and why it is conservative, in Section~8.2.
This gap is structural: recalibration preserves likelihood ratios and cannot close it.
Differential surveillance compounds the problem: we prove a \emph{Surveillance Ceiling} showing that, under a stylized count-threshold model, over-policing degrades the maximum achievable precision for affected populations. Beyond the theoretical results, we propose a practical policy response: risk reports used in detention contexts should display plain-language uncertainty information (Section~\ref{sec:transparency}).

\paragraph{Why PPV and LR matter.}
PPV answers the question judges most naturally care about: ``if this person is flagged high risk, how likely is it the flag is correct?'' LR is the bridge between the tool's signal quality and that hit rate: once the base rate is fixed, LR fully determines PPV. By contrast, the widely reported Area Under the ROC Curve (AUC) measures how well a tool \emph{ranks} people from lower to higher risk, but does not directly indicate how often a ``high-risk'' flag is correct (Section~9.3).

\paragraph{Contributions.}
We make four contributions:
\begin{enumerate}
\item \textbf{A base-rate-driven precision limit.} We formalize the Likelihood Ratio Wall (Theorem~\ref{thm:wall}), quantifying the gap between required and achieved likelihood ratios for rare violence prediction.
\item \textbf{Why recalibration cannot fix this limit.} We prove recalibration invariance (Theorem~\ref{thm:recal}): monotone score transformations preserve likelihood ratios at every threshold. This property is known in signal detection theory; our contribution is identifying it as a barrier to post-hoc fixes in pretrial risk assessment.
\item \textbf{A formal model of how over-surveillance lowers achievable precision.} We introduce the Surveillance Ceiling (Theorem~\ref{thm:ceiling}), demonstrating under stylized assumptions how asymmetric feature accumulation degrades maximum precision for over-policed populations even at equal base rates.
\item \textbf{Policy-facing metrics and transparency recommendations.} We translate the results into Number Needed to Detain (NND) and propose uncertainty disclosures for risk reports.
\end{enumerate}

The underlying probability relationship is standard in epidemiology \cite{Altman1994,Ioannidis2005}. Our novelty lies in applying it to pretrial violence prediction: focusing on ``high-risk'' cutoffs that trigger action, quantifying what discrimination is needed, and proving the Surveillance Ceiling as a second structural barrier independent of algorithmic design. Much prior work asks whether a tool treats groups differently. We identify a different problem: in low-prevalence settings, the available data may not contain enough signal for accurate individualized detention decisions. Unequal surveillance can further weaken that signal by increasing false alarms in the recorded data.

\paragraph{Paper structure.}
Section~2 reviews related work. Section~3 establishes the Likelihood Ratio Wall.
Section~4 proves recalibration invariance. Section~5 develops the Surveillance Ceiling. Section~6 presents empirical validation.
Section~7 addresses the counterfactual objection. Section~8 develops policy implications. Section~9 discusses broader implications. Sections~10--12 cover limitations, ethics, and conclusions.
Readers primarily interested in policy may focus on Sections~1, 3, 6, and~8.

\begin{tcolorbox}[colback=gray!5!white,colframe=gray!75!black,title=\textbf{Key Terms Used in This Paper}]
\small
\begin{itemize} {
    \item \textbf{Base rate} ($\pi$): how common the outcome is (e.g., the violent re-arrest rate).
    \item \textbf{Positive predictive value (PPV)}: among people flagged ``high risk,'' the share who actually experience the outcome. Also called \emph{precision}.
    \item \textbf{Likelihood ratio (LR)}: how much more likely the tool is to flag someone who will have the outcome than someone who will not. Higher LR means stronger signal.
    \item \textbf{AUC} (Area Under the ROC Curve): a ranking metric that measures how well a tool ranks higher-risk people above lower-risk people, but \emph{not} a direct measure of how often ``high-risk'' flags are correct.
    \item \textbf{Number Needed to Detain (NND)}: how many flagged defendants must be detained to prevent one violent offense ($= 1/\text{PPV}$).}
\end{itemize}
\end{tcolorbox}

\section{Related Work}

Our work bridges algorithmic fairness, clinical epidemiology, and legal theory.

\subsection{Impossibility Results in Algorithmic Fairness}

Prior fairness research asks whether a model can satisfy multiple desirable fairness criteria simultaneously. We address a different question: whether the model can be sufficiently accurate for rare violent outcomes in the first place. Chouldechova \cite{Chouldechova2017} proved that when base rates differ between groups, a classifier cannot simultaneously satisfy calibration (equal PPV across groups) and balance (equal false positive and false negative rates across groups).
Kleinberg et al.\ \cite{Kleinberg2017} extended this to show inherent trade-offs among multiple fairness definitions. We contribute a different impossibility: regardless of fairness definition, \emph{useful} PPV is unattainable for rare outcomes given the discriminative power of available features. The Likelihood Ratio Wall is a constraint on information, not equity.

\subsection{Diagnostic Accuracy and Base Rates}

A familiar problem in medicine is that when a condition is rare, even a good test can produce many false alarms. We apply the same logic to ``high-risk'' violence flags.
Altman and Bland \cite{Altman1994} demonstrated that even sensitive and specific tests yield poor PPV when screening for rare conditions.
Ioannidis \cite{Ioannidis2005} extended this to argue that many published findings are false because they test low-prior-probability hypotheses.
Major pretrial risk assessment validation studies \cite{DeMichele2018,Lowenkamp2013} typically report AUC but rarely report PPV at operational thresholds, obscuring the precision deficit we identify. Stevenson \cite{Stevenson2018} found that PSA implementation had minimal effect on detention rates;
our analysis suggests judges may implicitly recognize that ``high-risk'' flags carry limited informational weight.

\subsection{Feedback Loops and Surveillance}

Existing work shows that unequal surveillance can reproduce or amplify disparities. We extend this by showing a related result: unequal surveillance can also lower the best precision any classifier can achieve from those features.
Lum and Isaac \cite{Lum2016} demonstrated that predictive policing trained on arrest data reproduces over-policing patterns. Ensign et al.\ \cite{Ensign2018} formalized conditions under which such loops amplify disparities.
Eubanks \cite{Eubanks2018} documents how algorithmic systems entrench poverty by treating symptoms of deprivation as risk factors;
Benjamin \cite{Benjamin2019} argues that ``risk'' categories sanitize racialized surveillance. Our Surveillance Ceiling provides a formal foundation for these critiques, extending the literature from bias to capacity.

\subsection{Legal Standards of Proof}

In U.S.\ legal terminology, \textbf{preponderance of the evidence} means ``more likely than not'' ($>50\%$ probability), while \textbf{clear and convincing evidence} denotes substantially higher confidence. The Supreme Court in \emph{Addington v.\ Texas} applied these concepts in civil commitment \cite{Addington1979}. Pretrial detention involves a distinct legal framework: standards vary by jurisdiction, and dangerousness determinations are often not expressed as explicit probabilistic thresholds. Nevertheless, detention is liberty-depriving, and statistical confidence remains normatively relevant, a point emphasized by legal scholars who criticize actuarial tools for projecting confidence that may exceed evidentiary support at the individual level \cite{Harcourt2007,Starr2014}. Skeem and Monahan \cite{Skeem2011} noted the limited clinical utility of violence risk assessment.
Mayson \cite{Mayson2018} posed the foundational question our analysis quantifies: what probability of future crime justifies pretrial detention? She showed that this question remains unresolved in existing doctrine, with no consensus on the required risk threshold.
We use these legal labels as \emph{analogical benchmarks}, not claims that detention law requires a particular PPV threshold. Our point is comparative: the confidence implied by ``high-risk'' labels is often far lower than readers may assume.

\section{The Likelihood Ratio Wall}

This section states the core mathematical result and translates it: how strong would a tool need to be for a ``high-risk'' violence flag to be right at least half the time? The bound is architecture-independent: it applies to logistic regression, random forests, neural networks, and human judgment alike.

\subsection{Preliminaries}

We consider a binary classification setting where $Y \in \{0,1\}$ denotes the true outcome (think of $Y=1$ as ``a violent re-arrest occurs if released'') and $F \in \{0,1\}$ denotes the classifier's flag ($F=1$ meaning ``the tool labels the person high risk''). Let $\pi = \Pr(Y=1)$ denote the base rate. We define:

\begin{itemize}
    \item \textbf{Sensitivity} (true positive rate): $s = \Pr(F=1 \mid Y=1)$
    \item \textbf{False positive rate}: $q = \Pr(F=1 \mid Y=0)$
    \item \textbf{Likelihood ratio}: $\text{LR} = s/q$
    \item \textbf{Positive predictive value} (PPV): $\Pr(Y=1 \mid F=1)$, the share of flagged cases that are true positives
\end{itemize}

\subsection{The Universal Bound}

The following theorem formalizes a basic but often hidden fact: when the outcome is rare, the tool's flag must be extremely selective to be correct even half the time.

\begin{theorem}[Universal Precision Bound --- The Likelihood Ratio Wall]
\label{thm:wall}
For any binary classifier with likelihood ratio $\text{LR} = s/q$ operating at base rate $\pi$:
\[
\text{PPV} = \frac{1}{1 + \frac{1-\pi}{\pi} \cdot \frac{1}{\text{LR}}}.
\]
Equivalently, achieving $\text{PPV} \geq \alpha$ requires:
\begin{equation}
\text{LR} \geq \frac{\alpha}{1-\alpha} \cdot \frac{1-\pi}{\pi}.
\label{eq:wall}
\end{equation}
\end{theorem}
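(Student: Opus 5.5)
The plan is to derive the PPV identity directly from Bayes' rule, then invert it monotonically to get the bound~\eqref{eq:wall}. First, write
\[
\text{PPV} = \Pr(Y=1\mid F=1) = \frac{\Pr(F=1\mid Y=1)\Pr(Y=1)}{\Pr(F=1\mid Y=1)\Pr(Y=1) + \Pr(F=1\mid Y=0)\Pr(Y=0)} = \frac{s\pi}{s\pi + q(1-\pi)}.
\]
This uses the law of total probability for $\Pr(F=1)$. Dividing numerator and denominator by $s\pi$ gives $1/\bigl(1 + \frac{1-\pi}{\pi}\cdot\frac{q}{s}\bigr)$, and substituting $q/s = 1/\text{LR}$ yields the stated formula.

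The only point needing care is the nondegeneracy assumptions. We need $0<\pi<1$, so that the base rate is nontrivial and the odds are finite. We need $s>0$, so the flag fires on some positives and PPV is defined, which also requires $\Pr(F=1)>0$. We need $q>0$, so that $\text{LR}=s/q$ is finite. I would state these explicitly. I would also remark that the case $q=0$, $s>0$ corresponds to $\text{LR}=\infty$ and $\text{PPV}=1$, which is consistent with the formula read in the extended sense.

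For the equivalence, note that $\text{PPV}$ is a strictly increasing function of $\text{LR}$ on $(0,\infty)$. Hence $\text{PPV}\ge\alpha$ for $\alpha\in(0,1)$ is equivalent to $1 + \frac{1-\pi}{\pi}\cdot\frac{1}{\text{LR}} \le 1/\alpha$. This rearranges to $\frac{1-\pi}{\pi}\cdot\frac{1}{\text{LR}} \le \frac{1-\alpha}{\alpha}$, and since all quantities are positive, to $\text{LR} \ge \frac{\alpha}{1-\alpha}\cdot\frac{1-\pi}{\pi}$. An equivalent and illuminating route is the odds form of Bayes' rule, $\frac{\text{PPV}}{1-\text{PPV}} = \text{LR}\cdot\frac{\pi}{1-\pi}$: posterior odds equal the likelihood ratio times the prior odds. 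Requiring posterior odds of at least $\alpha/(1-\alpha)$ gives~\eqref{eq:wall} immediately.

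There is no real obstacle; the main thing to get right is the restriction $\alpha\in(0,1)$ and the positivity conditions, which justify dividing and flipping inequalities. I would close by noting architecture-independence. The argument uses only the joint distribution of $(F,Y)$, so it applies to any flagging rule, whether algorithmic or human.
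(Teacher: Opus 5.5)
Your proposal is correct and follows essentially the same route as the paper. The paper applies Bayes' rule to get $\text{PPV} = s\pi/(s\pi + q(1-\pi))$, divides by $q(1-\pi)$ to reach your odds form $\text{LR}\cdot\Omega/(1+\text{LR}\cdot\Omega)$ with $\Omega=\pi/(1-\pi)$, and then solves $\text{PPV}\ge\alpha$ for LR. Your explicit conditions ($0<\pi<1$, $q>0$, $\alpha\in(0,1)$) are a welcome addition the paper leaves implicit, and dividing by $q(1-\pi)$ instead of $s\pi$, as the paper does, makes your separate $s>0$ assumption unnecessary.
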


\begin{proof}

By Bayes' theorem:
$\text{PPV} = \frac{s \pi}{s \pi + q(1-\pi)}$.
Dividing by $q(1-\pi)$:
$\text{PPV} = \frac{\text{LR} \cdot \Omega}{1 + \text{LR} \cdot \Omega}$
where $\Omega = \pi/(1-\pi)$ is the prior odds, giving the first formula. Setting $\text{PPV} \geq \alpha$ and solving for LR yields Equation~\eqref{eq:wall}.

\end{proof}

In plain terms: the rarer the event, the stronger the signal must be for a ``high-risk'' flag to be trustworthy. Required LRs grow hyperbolically as base rates decrease.

\begin{corollary}[The 99-to-1 Rule: a worked example]
\label{cor:99to1}
At $\pi = 1\%$, achieving $\text{PPV} \geq 50\%$ requires $\text{LR} \geq 99$---the tool must be roughly 99 times more likely to flag a true positive than a false positive. This follows directly from substituting into Equation~\eqref{eq:wall}.
\end{corollary}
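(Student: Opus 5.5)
The corollary is a direct specialization of Theorem~\ref{thm:wall}, so I will substitute $\pi = 0.01$ and $\alpha = 0.5$ into Equation~\eqref{eq:wall} and simplify. With $\alpha = 1/2$, the factor $\alpha/(1-\alpha)$ equals $1$, so the required likelihood ratio is exactly the inverse prior odds $(1-\pi)/\pi$. At $\pi = 0.01$ this gives $0.99/0.01 = 99$, which yields $\text{LR} \geq 99$.

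To support the interpretive clause (``roughly 99 times more likely to flag a true positive than a false positive''), I will recall the definition $\text{LR} = s/q = \Pr(F=1 \mid Y=1)/\Pr(F=1 \mid Y=0)$ from the preliminaries. The inequality then reads $s \geq 99q$, which is the stated interpretation.

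I also want to confirm that the bound is sharp and not merely sufficient. Plugging $\text{LR} = 99$ into the closed-form PPV expression of Theorem~\ref{thm:wall} gives
\[
\text{PPV} = \frac{1}{1 + 99 \cdot \tfrac{1}{99}} = \frac{1}{2}.
\]
Since PPV is strictly increasing in LR for fixed $\pi \in (0,1)$, any $\text{LR} < 99$ gives $\text{PPV} < 1/2$. So $99$ is exactly the threshold.

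The only point that needs care is the degenerate case $q = 0$, where LR is undefined or infinite. The theorem implicitly assumes $q > 0$, and if $q = 0$ then PPV equals $1$ trivially, which is consistent with the claim. Otherwise there is no real obstacle; the result is pure arithmetic.
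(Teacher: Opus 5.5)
Your proposal is correct and follows the paper's own route, which is simply substituting $\pi = 0.01$ and $\alpha = 1/2$ into Equation~\eqref{eq:wall} to get $(1-\pi)/\pi = 99$. Your sharpness check via the closed-form PPV and your remark on the degenerate case $q = 0$ are harmless additions (that case also needs $s > 0$, since otherwise no one is flagged and PPV is undefined).
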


Table~\ref{tab:wall} displays required likelihood ratios across base rates and PPV targets.

\begin{table}[t]
\centering

\caption{Required LR (how much more often the tool must flag true positives than false positives) to achieve a target PPV (share of ``high-risk'' flags that are correct).
Shaded cells require LR values exceeding current instrument performance ($\text{LR} \approx 2\text{--}6$; see Section~6).
At typical violent re-arrest rates (2--5\%), even 50\% PPV requires LR well beyond this range.}

\label{tab:wall}
\begin{tabular}{l|cccc}
\toprule
Base rate $\pi$ & \multicolumn{4}{c}{Required LR for target PPV} \\
 & 25\% & 50\% & 75\% & 90\% \\
\midrule
10\% & 3.0 & \cellcolor{gray!15}9.0 & \cellcolor{gray!15}27 & \cellcolor{gray!15}81 \\
5\% & \cellcolor{gray!15}6.3 & \cellcolor{gray!15}19 & \cellcolor{gray!15}57 & \cellcolor{gray!15}171 \\
3\% & \cellcolor{gray!15}10.8 & \cellcolor{gray!15}32.3 & \cellcolor{gray!15}97 & \cellcolor{gray!15}291 \\
2\% & \cellcolor{gray!15}16.3 & \cellcolor{gray!15}49 & \cellcolor{gray!15}147 & \cellcolor{gray!15}441 \\
1\% & \cellcolor{gray!15}33 & \cellcolor{gray!15}99 & \cellcolor{gray!15}297 & \cellcolor{gray!15}891 \\
\bottomrule
\end{tabular}
\end{table}

\subsection{Visualizing the Wall}

Figure~\ref{fig:wall} plots curves showing combinations of base rate and likelihood ratio that produce the same PPV. The gray-shaded region below the solid black curve represents settings where PPV falls below 50\%, where the ``high-risk'' flag is wrong more often than right. At base rates typical of pretrial violent re-arrest (2--5\%), tools operating in the LR range reported for COMPAS fall entirely within this region.

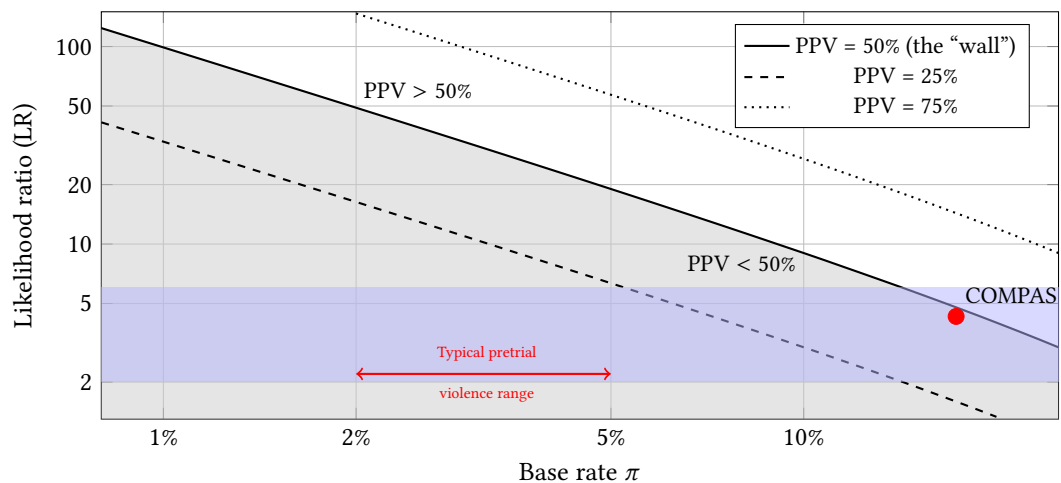
\begin{figure}[t]
\centering
\begin{tikzpicture}
\begin{axis}[
    width=0.9\columnwidth,
    height=7cm,
    xlabel={Base rate $\pi$},
    ylabel={Likelihood ratio (LR)},
    xmode=log,
    ymode=log,
    xmin=0.008, xmax=0.25,
    ymin=1.3, ymax=150,
    xtick={0.01,0.02,0.05,0.1},
    xticklabels={1\%,2\%,5\%,10\%},
    ytick={2,5,10,20,50,100},
    yticklabels={2,5,10,20,50,100},
    grid=major,
    legend pos=north east,
    legend style={font=\small},
    clip=false
]

\addplot[thick, black, domain=0.008:0.25, samples=50, name path=wall] {(1-x)/x};
\addlegendentry{PPV = 50\% (the ``wall'')}

\path[name path=bottom] (axis cs:0.008,1.3) -- (axis cs:0.25,1.3);
\addplot[gray!40, opacity=0.5, forget plot] fill between[of=bottom and wall];

\addplot[thick, black, dashed, domain=0.008:0.20, samples=50] {(1-x)/(3*x)};
\addlegendentry{PPV = 25\%}

\addplot[thick, black, dotted, domain=0.02:0.25, samples=50] {3*(1-x)/x};
\addlegendentry{PPV = 75\%}

\fill[blue!30, opacity=0.5] (axis cs:0.008,2) rectangle (axis cs:0.25,6);

\addplot[only marks, mark=*, mark size=3pt, red] coordinates {(0.173, 4.3)};
\node[font=\small, anchor=south west] at (axis cs:0.173, 4.5) {COMPAS};

\draw[thick, red, <->] (axis cs:0.02, 2.2) -- (axis cs:0.05, 2.2);
\node[font=\tiny, anchor=south, red] at (axis cs:0.032, 2.3) {Typical pretrial};
\node[font=\tiny, anchor=north, red] at (axis cs:0.032, 2.1) {violence range};

\node[font=\small] at (axis cs:0.025, 60) {PPV $> 50\%$};
\node[font=\small] at (axis cs:0.08, 8.0) {PPV $< 50\%$};

\end{axis}
\end{tikzpicture}

\caption{The Likelihood Ratio Wall.
Curves show the LR required for each PPV level. The gray region below the solid curve is where the ``high-risk'' flag is wrong more often than right (PPV $< 50\%$). The blue band shows approximate LR range ($\approx 2\text{--}6$) from validation literature (Section~6). The red COMPAS point is plotted at the Broward County base rate ($\pi = 17.3\%$) and estimated LR ($4.3$; Table~\ref{tab:compas}). The red bracket marks the typical pretrial violent re-arrest range (2--5\%), where instruments in the blue band fall squarely in the shaded region.}
\Description[Log-log plot showing the likelihood ratio required for various PPV thresholds across base rates from 1\% to 25\%, with current instrument performance falling well short of what is needed for rare violent re-arrest.]{Log-log plot of likelihood ratio (vertical axis, 1.3 to 150) against base rate (horizontal axis, 1\% to 25\%). Three downward-sloping curves show the LR required for PPV equal to 25\% (dashed), 50\% (solid), and 75\% (dotted); all three curves rise steeply as the base rate decreases. A gray-shaded region lies below the 50\% solid curve, marking combinations where the high-risk flag is wrong more often than right. A horizontal blue band between LR 2 and 6 shows the empirical performance range of current instruments. A red dot at base rate 17.3\%, LR 4.3, is labelled COMPAS. A red bracket along the lower portion of the plot, between base rates 2\% and 5\%, marks the typical pretrial violent re-arrest range; instruments operating in the blue band within this base-rate range fall entirely within the gray-shaded region where PPV is below 50\%.}
\label{fig:wall}
\end{figure}

\section{Recalibration Invariance}

A common response to disappointing model performance is to \emph{recalibrate}: rescale scores so that predicted probabilities match observed frequencies. This can improve how well probabilities are stated, but does not improve the tool's ability to separate positives from negatives. Standard recalibration methods, including Platt scaling \cite{Platt1999}, isotonic regression \cite{Zadrozny2002}, and related approaches \cite{NiculescuMizil2005}, produce non-decreasing transformations of the original score, preserving the ordering of individuals by risk. The invariance of operating points under monotone transforms is known in signal detection theory; our contribution is identifying it as a structural barrier to post-hoc fixes in pretrial risk assessment.

\begin{theorem}[Recalibration Preserves Likelihood Ratios]
\label{thm:recal}
Let $S$ be a continuous risk score and $f: \mathbb{R} \to \mathbb{R}$ be any strictly increasing transformation.
For every threshold $t$:
\[
\text{LR}_{f(S)}(f(t)) = \text{LR}_S(t).
\]
\end{theorem}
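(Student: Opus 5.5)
The plan is to show that thresholding $f(S)$ at $f(t)$ produces exactly the same flag as thresholding $S$ at $t$. Once the flag is the same random variable, sensitivity, false positive rate, and hence LR all coincide.

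First I will fix the convention implicit in the paper: a score $S$ thresholded at $t$ produces the flag $F_t = \mathbf{1}\{S \geq t\}$. The associated quantities are
\[
s_S(t) = \Pr(S \geq t \mid Y=1), \qquad q_S(t) = \Pr(S \geq t \mid Y=0), \qquad \text{LR}_S(t) = \frac{s_S(t)}{q_S(t)},
\]
defined whenever $q_S(t) > 0$. The same definitions apply to the transformed score $f(S)$ at threshold $f(t)$.

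The key step is the event identity $\{f(S) \geq f(t)\} = \{S \geq t\}$. Because $f$ is strictly increasing, $S \geq t$ implies $f(S) \geq f(t)$. Conversely, if $S < t$, then $f(S) < f(t)$, which is the contrapositive of the reverse inclusion. The two events therefore coincide pointwise on the sample space. This is the only place strict monotonicity is used: a merely non-decreasing $f$, such as an isotonic fit with flat segments, could merge $\{S<t\}$ points into $\{f(S) \geq f(t)\}$. I will add a remark that in that case the operating points of $f(S)$ form a subset of those of $S$, so recalibration still cannot create a new, better operating point.

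Given the event identity, conditioning on $Y=1$ and $Y=0$ gives $s_{f(S)}(f(t)) = s_S(t)$ and $q_{f(S)}(f(t)) = q_S(t)$. The LRs are then equal, including the degenerate case where both false positive rates are zero. By Theorem~\ref{thm:wall}, PPV at fixed $\pi$ is a function of LR alone, so the corollary follows that recalibration cannot move any operating point across the Wall.

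I do not expect a real obstacle. The only care needed is stating the thresholding convention (strict versus non-strict inequality), which must be the same for both scores. Continuity of $S$ is not needed for the identity; it matters only in that it makes the choice between $\geq$ and $>$ immaterial.
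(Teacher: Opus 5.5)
Your proof is correct and matches the paper's argument: strict monotonicity gives the event identity $\{f(S)\ge f(t)\}=\{S\ge t\}$, and conditioning on $Y=1$ and $Y=0$ then equates sensitivity and false positive rate, hence LR. Your remark on merely non-decreasing $f$ parallels the paper's Remark, and it is slightly sharper in stating that the operating points of $f(S)$ form a subset of those of $S$ (for continuous $S$, where $\ge$ and $>$ thresholds agree in probability).
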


\begin{proof}

Since $f$ is strictly increasing, for any $t$ the event $\{S \geq t\}$ is identical to $\{f(S) \geq f(t)\}$: if $S \geq t$ then $f(S) \geq f(t)$ by monotonicity, and if $S < t$ then $f(S) < f(t)$ by strict increase. The two events have the same indicator function, so conditioning on any $\sigma$-algebra (including $\{Y=1\}$ or $\{Y=0\}$) preserves equality:
\begin{align*}
s_{f(S)}(f(t)) &= \Pr(f(S) \geq f(t) \mid Y=1) = \Pr(S \geq t \mid Y=1) = s_S(t), \\
q_{f(S)}(f(t)) &= \Pr(f(S) \geq f(t) \mid Y=0) = \Pr(S \geq t \mid Y=0) = q_S(t).
\end{align*}
Both sensitivity and false positive rate are preserved at the corresponding threshold, so their ratio is invariant:
\[
\text{LR}_{f(S)}(f(t)) = \frac{s_{f(S)}(f(t))}{q_{f(S)}(f(t))} = \frac{s_S(t)}{q_S(t)} = \text{LR}_S(t). \qedhere
\]

\end{proof}

\begin{remark}
If $f$ is only non-decreasing (e.g., isotonic regression), it preserves LRs except at thresholds intersecting flat regions. Even more flexible recalibration methods cannot create performance beyond the best trade-offs already present in the original score.
\end{remark}

In practical terms, changing the score scale while preserving ordering does not create new predictive signal for threshold-based detention decisions. Recalibration improves the \emph{reliability} of probability estimates but not \emph{discrimination}.
If the underlying features achieve only modest separation (for example, $\text{LR} \approx 3$, which at low base rates yields very poor PPV), no monotonic transformation can improve it.

\section{The Surveillance Ceiling}

Communities subject to differential policing accumulate observable ``risk factors''
through increased exposure to the criminal legal system rather than elevated
underlying risk. The goal of this section is to isolate one mechanism by which this unequal surveillance can increase false alarms and lower achievable precision. We derive a \emph{mechanism-isolating directional result} for a class of score constructions, demonstrating how asymmetric data generation can induce different upper limits on achievable PPV across groups.

\subsection{Setup, Assumptions, and Practical Relevance}

We use a simplified model with repeated binary ``risk markers'' to make the mechanism transparent. Many pretrial risk tools use additive point-score constructions that count accumulated administrative indicators: prior convictions, arrests, failure-to-appear counts, pending charges \cite{DeMichele2018,Lowenkamp2013}. Count-threshold classifiers capture the core logic of such instruments: an individual is flagged when enough indicators are present. Our model does not capture nonlinear interactions or richer covariates; it speaks most directly to score components built from accumulated administrative markers.

Consider two populations, $A$ and $B$, with binary outcome $Y\in\{0,1\}$ and $k$ binary risk factors $X_1,\dots,X_k$. We assume:
\begin{itemize}
    \item \textbf{Equal base rates}: $\Pr(Y=1\mid G=A)=\Pr(Y=1\mid G=B)=\pi$.
    \item \textbf{Independence} (for tractability): the $X_i$ are i.i.d.\ given $(Y,G)$.
    \item \textbf{Identical positive-class distributions}: $X_i \mid (Y=1,G) \sim \mathrm{Bernoulli}(p_+)$ for both groups.
    \item \textbf{Different negative-class prevalences}: $X_i \mid (Y=0,G) \sim \mathrm{Bernoulli}(p_G)$, with $p_A < p_B$.
\end{itemize}

The inequality $p_A<p_B$ captures the mechanism documented by Lum and Isaac \cite{Lum2016}: over-policed communities accumulate administrative markers at higher rates even among those who do not commit violent offenses. We study \emph{count-threshold classifiers} that flag an individual when $\sum_{i=1}^k X_i \ge m(k)$.

\subsection{The Ceiling Theorem}

The following theorem states two points: (1) when recorded risk markers are more common among negatives in one group, false positive rates grow rapidly as more markers are used; (2) this lowers the best achievable LR, and therefore the best achievable PPV (via Theorem~\ref{thm:wall}), for the over-policed group.

\begin{theorem}[Surveillance Ceiling for Count-Threshold Classifiers]
\label{thm:ceiling}
Fix a threshold sequence
$m(k)=\lceil k\theta\rceil$
for some $\theta\in(0,1)$ satisfying $\theta>p_B$.
Let $q_G(k)$ denote the false positive rate in group $G$ and $s(k)$ the sensitivity (common across groups).
Assume the sample mean $S_k = \frac{1}{k}\sum X_i$ satisfies a \emph{Large Deviation Principle} (LDP), a standard regularity condition describing how quickly tail probabilities decay, with rate function $I_G(x)$ for the negative class in each group.

Then:
\begin{enumerate}
\item[(i)] The false positive rate ratio grows exponentially in $k$:
\[
\lim_{k\to\infty}\frac{1}{k}\log\frac{q_B(k)}{q_A(k)}
= I_A(\theta)-I_B(\theta) \;>\; 0.
\]
\item[(ii)] $\mathrm{LR}_B(k) < \mathrm{LR}_A(k)$ for every threshold, and
$\sup_m \mathrm{LR}_B(m) \le \sup_m \mathrm{LR}_A(m)$
(where the supremum is the best achievable value over count-threshold classifiers). Because PPV is strictly increasing in LR for fixed base rate (Theorem~\ref{thm:wall}), this LR ceiling implies a \emph{PPV ceiling}: maximum achievable PPV is lower in the over-policed group.
\end{enumerate}
\end{theorem}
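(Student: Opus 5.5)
The plan is to use the i.i.d.\ Bernoulli structure, under which the negative-class count in group $G$ is $\mathrm{Bin}(k,p_G)$. The LDP hypothesis then holds by Cramér's theorem, with the explicit rate function
\[
I_G(x)=x\log\frac{x}{p_G}+(1-x)\log\frac{1-x}{1-p_G},
\]
the Bernoulli relative entropy $D(x\,\|\,p_G)$.

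For part (i), write $q_G(k)=\Pr(S_k\ge m(k)/k\mid Y=0,G)$ with $m(k)/k=\lceil k\theta\rceil/k\to\theta$. Since $\theta>p_B>p_A$, the set $[\theta,1]$ lies to the right of both means. The rate function is continuous and strictly increasing on $[p_G,1)$, so the LDP upper and lower bounds coincide on this set, giving
\[
\frac1k\log q_G(k)\to -I_G(\theta).
\]
To handle the threshold $\lceil k\theta\rceil/k$ moving with $k$, I would sandwich $[\lceil k\theta\rceil/k,1]$ between $[\theta+\varepsilon,1]$ and $[\theta,1]$ for large $k$, then let $\varepsilon\downarrow0$ using continuity of $I_G$. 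Subtracting the two limits gives the stated limit $I_A(\theta)-I_B(\theta)$. Positivity follows because $\partial_p D(\theta\|p)=-\theta/p+(1-\theta)/(1-p)=(p-\theta)/(p(1-p))<0$ for $p<\theta$. So $D(\theta\|p)$ is strictly decreasing in $p$ on $(0,\theta)$, and $p_A<p_B<\theta$ gives $I_A(\theta)>I_B(\theta)$.

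For part (ii), I would not use asymptotics but a direct stochastic-dominance argument at each fixed $k$ and integer threshold $m$. For $m\ge1$, the binomial upper tail $\Pr(\mathrm{Bin}(k,p)\ge m)$ is strictly increasing in $p\in(0,1)$. This can be shown by coupling $X_i=\mathbf 1\{U_i\le p\}$ with uniforms, or via the identity $\frac{d}{dp}\Pr(\mathrm{Bin}(k,p)\ge m)=k\binom{k-1}{m-1}p^{m-1}(1-p)^{k-m}>0$. Hence $q_B(m)>q_A(m)$. The sensitivity $s(m)$ is common to both groups and positive whenever $p_+>0$, so
\[
\mathrm{LR}_B(m)=\frac{s(m)}{q_B(m)}<\frac{s(m)}{q_A(m)}=\mathrm{LR}_A(m).
\]
Taking suprema over $m$ gives the weak inequality. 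Monotonicity of PPV in LR from Theorem~\ref{thm:wall}, at the common base rate $\pi$, then transfers both the pointwise and the sup inequality to PPV.

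The main obstacle is one degenerate case: $m=0$ flags everyone, giving $q_A=q_B=1$ and equal LR, so the strict inequality ``for every threshold'' must be read as $m\ge1$, which holds for $m(k)=\lceil k\theta\rceil$. The other delicate point is the moving threshold in (i); the $\varepsilon$-sandwich with continuity of the explicit entropy rate handles it. Alternatively, one can bound the binomial tail directly between $\binom{k}{m}p^m(1-p)^{k-m}$ and $(k+1)$ times that term, then apply Stirling's formula.
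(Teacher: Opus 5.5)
Your proposal is correct and follows the same route as the paper's appendix proof. For (i) that route is the LDP limit $\frac1k\log q_G(k)\to -I_G(\theta)$ with $I_G(\theta)=D(\theta\,\|\,p_G)$ decreasing in $p_G$ below $\theta$; for (ii) it is a common sensitivity together with $q_B>q_A$ at each threshold, and you also prove the binomial-tail monotonicity the paper only asserts and handle the $m=0$ case it omits.

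Two small refinements. First, since $\sum_i X_i$ is integer-valued, $\{S_k\ge\theta\}$ and $\{\sum_i X_i\ge\lceil k\theta\rceil\}$ are the same event, so the threshold does not really move and your $\varepsilon$-sandwich is unnecessary. Second, strict monotonicity of the tail in $p$ holds for $1\le m\le k$ rather than all $m\ge1$, and $m(k)=\lceil k\theta\rceil$ lies in that range.
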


\begin{proof}[Proof sketch]
The false positive rate depends on how often negative cases exceed the threshold. If surveillance increases marker prevalence among negatives in Group~$B$, then negative cases in $B$ cross the threshold more often. Under the LDP, this probability decays at rate $I_G(\theta)$, and $p_B > p_A$ implies $I_B(\theta) < I_A(\theta)$ (thicker tail in $B$). Under i.i.d.\ Bernoulli, $I_G(\theta) = D(\theta \| p_G)$ (the KL divergence), and $D(\theta \| p_A) > D(\theta \| p_B)$ since $\theta > p_B > p_A$. Since sensitivity is identical across groups, higher FPR in $B$ implies lower LR and thus lower PPV. Full proof: Appendix~\ref{app:ceiling_proof}.
\end{proof}

\paragraph{Interpreting the theorem in practice.}
The identical positive-class assumption holds sensitivity constant to isolate the effect of surveillance-driven negative-class inflation. In practice, positive-class distributions may differ: if positive-class marker prevalence is also higher in the more-policed group (e.g., because detected offenders accumulate more system contact), the sensitivity advantage may partially offset the FPR penalty, narrowing the gap. Conversely, if positive-class prevalence is similar while negative-class prevalence diverges, the theorem's prediction is conservative. The theorem supports a \emph{directional conclusion}: surveillance-induced inflation of negative-class feature prevalence lowers LR and PPV for affected groups. It is a comparative benchmark, not a literal estimator of deployed-system group differences.

\paragraph{Robustness to assumption violations.}
When features are correlated rather than independent, the directional intuition is preserved: higher feature prevalence among negatives still produces higher false positive rates. If the classifier uses weighted combinations rather than count thresholds, the quantitative predictions do not apply directly, but any model influenced by aggregate feature prevalence will exhibit qualitatively similar behavior. If the surveillance effect is concentrated in a subset of features, the ceiling is attenuated but remains.

To illustrate the sensitivity to the identical positive-class assumption, consider a variant of Table~\ref{tab:correlated_ceiling} in which the positive-class marker prevalence is 10\% higher in the more-policed group ($p_+^B = 0.75$ vs.\ $p_+^A = 0.68$), reflecting greater system contact among detected offenders. In this case, sensitivity rises from 62.4\% to approximately 72\% in Group~$B$, partially offsetting the FPR penalty: the LR gap narrows from roughly 13-fold to approximately 12-fold, and the PPV gap narrows from roughly 6-fold to approximately 5-fold. The directional ceiling remains (Group~$B$ still has strictly lower LR and PPV than Group~$A$), but its magnitude is attenuated. Conversely, if positive-class prevalence is equal or lower in Group~$B$ while negative-class prevalence diverges, the theorem's prediction is conservative. Empirical work comparing achievable LRs across groups with different enforcement exposure would be valuable for quantifying these effects in deployed systems.

\subsection{Finite-Sample Illustration}

Figure~\ref{fig:ceiling} illustrates the mechanism schematically. Table~\ref{tab:correlated_ceiling} reports a  deterministic numerical example (not a Monte Carlo simulation) with $k=10$ binary risk factors and moderate correlation among the risk-factor indicators ($\rho=0.2$). Despite identical base rates ($\pi = 3\%$) and identical positive-class behavior, the same classifier achieves PPV above 50\% in the less-policed group but collapses to 11\% in the more-policed group due solely to inflated negative-class prevalence.

\begin{table}[t]
\centering

\caption{Surveillance Ceiling illustration with correlated features ($\rho=0.2$ among binary risk-factor indicators). Values are computed from specified parameters, not estimated from repeated trials. Despite identical base rates and positive-class distributions, surveillance produces a 6-fold PPV gap. NND = $1/\text{PPV}$: flagged defendants detained per prevented offense.}

\label{tab:correlated_ceiling}
\begin{tabular}{lcc}
\toprule
Metric & Group A (Less Policed) & Group B (More Policed) \\
\midrule
Neg.-Class Prevalence ($p_G$) & 15\% & 35\% \\
False Positive Rate ($q$) & 1.1\% & 14.8\% \\
Sensitivity ($s$) & 62.4\% & 62.4\% \\
\midrule
\textbf{Likelihood Ratio (LR)} & \textbf{56.7} & \textbf{4.2} \\
\textbf{Projected PPV (at $\pi=3\%$)} & \textbf{64\%} & \textbf{11\%} \\
\textbf{Number Needed to Detain (NND)} & \textbf{1.6} & \textbf{9.1} \\
\bottomrule
\end{tabular}
\end{table}

\begin{figure}[t]
\centering
\begin{tikzpicture}
\begin{axis}[
    width=0.9\columnwidth,
    height=6cm,
    xlabel={Risk Score},
    ylabel={Density},
    domain=0:10,
    samples=100,
    axis lines=left,
    ytick=\empty,
    xtick={5},
    xticklabels={Threshold},
    legend style={at={(0.97,0.97)},anchor=north east, font=\small},
]
\addplot [thick, red, fill=red!15, opacity=0.6] {exp(-(x-7)^2/2)};
\addlegendentry{Positives (both)}

\addplot [thick, blue, dashed] {exp(-(x-2)^2/2.5)};
\addlegendentry{Negatives (A)}

\addplot [thick, blue, densely dotted, fill=blue!15, opacity=0.4] {exp(-(x-4)^2/2.5)};
\addlegendentry{Negatives (B)}

\draw [thick, black] (axis cs:5,0) -- (axis cs:5,1.15);
\draw[->, very thick, black] (axis cs:2.3, 0.75) -- (axis cs:3.7, 0.75);
\node[font=\small, anchor=south] at (axis cs:3, 0.75) {Surveillance shift};

\end{axis}
\end{tikzpicture}

\caption{Stylized schematic (not empirical density estimates) of the Surveillance Ceiling mechanism.
Over-policing shifts the negative-class distribution rightward for Group~B (dotted blue), increasing overlap with positives and inflating FPR. The positive class is identical across groups.}
\Description[Schematic of three overlapping density curves illustrating how surveillance shifts the negative-class distribution toward the threshold for the over-policed group.]{Stylized density plot with three curves above a horizontal axis labelled Risk Score, with a vertical line marking the threshold near the centre. A solid red curve labelled ``Positives (both)'' peaks well to the right of the threshold. A dashed blue curve labelled ``Negatives (A)'' peaks well to the left of the threshold, with little overlap with the positive curve. A dotted blue curve labelled ``Negatives (B)'' peaks between the two, with substantial overlap with the positive curve. A horizontal arrow labelled ``Surveillance shift'' points from the Group A peak toward the Group B peak, illustrating that over-policing shifts the negative-class distribution rightward, pushing more of the negative cases above the threshold.}
\label{fig:ceiling}
\end{figure}
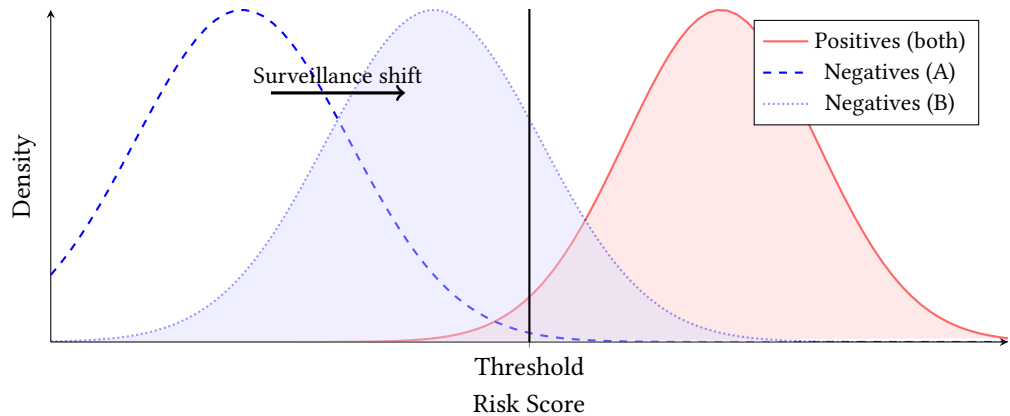

\section{Empirical Validation}

This section illustrates the paper's key quantities on a public dataset, not to claim that one historical dataset represents all modern deployments. Due to the proprietary nature of many modern tools, comprehensive multi-jurisdictional data with outcomes is often unavailable.
We use the ProPublica COMPAS dataset ($N = 4{,}743$) from Broward County, Florida \cite{Angwin2016} as an illustrative example. COMPAS remains one of the few publicly available datasets permitting direct LR calculation from individual-level scores and outcomes.
Previous research suggests that simple linear models of criminal history perform similarly across jurisdictions \cite{Rudin2019}.

\paragraph{Validation via meta-analysis.}
Many validation studies report AUC (how well a tool ranks people) but not how often a ``high-risk'' label is correct. A systematic review by Desmarais, Johnson, and Singh found a median AUC of 0.68 for predicting violent recidivism across 19 instruments \cite{Desmarais2013}. An AUC of 0.70 generally corresponds to LRs in the low single digits ($< 5$) at operational thresholds \cite{Hand2009}.

\paragraph{Validation of modern instruments.}
Even the PSA, the most widely adopted modern instrument, demonstrates similar limitations. A statewide validation funded by its developer reported AUCs of 0.64--0.66 for violent recidivism \cite{DeMichele2018}. At those levels, pushing PPV above 50\% at a 3\% base rate would require such an extreme cutoff that the tool would flag very few people and miss most true positives.

\subsection{Likelihood Ratio Estimation}

Table~\ref{tab:compas} reports COMPAS performance for violent recidivism. The 17.3\% base rate reflects a broader definition than typical pretrial violent felony re-arrest (2--5\%). Even at this elevated base rate, PPV barely reaches the ``more likely than not'' benchmark.

\begin{table}[t]
\centering

\caption{COMPAS ``high risk'' flag performance for violent recidivism ($\pi = 17.3\%$). PPV = share of flagged cases that are true positives.}

\label{tab:compas}
\begin{tabular}{lcc}
\toprule
Metric & Estimate & 95\% CI \\
\midrule
Sensitivity ($s$) & 39.6\% & (34.2\%, 45.1\%) \\
False Positive Rate ($q$) & 9.2\% & (8.1\%, 10.4\%) \\
Likelihood Ratio (LR) & 4.3 & (3.1, 5.8) \\
PPV at $\pi = 17.3\%$ & 47\% & (41\%, 53\%) \\
\bottomrule
\end{tabular}
\end{table}

Table~\ref{tab:projected} projects the same LR to realistic pretrial base rates. As base rates drop, PPV collapses.

\begin{table}[t]
\centering

\caption{Projected PPV and NND (defendants detained per prevented offense) at typical pretrial base rates, assuming $\text{LR} = 4.3$.}

\label{tab:projected}
\begin{tabular}{lcc}
\toprule
Base Rate & Projected PPV & NND \\
\midrule
17.3\% (Broward) & 47\% & 2.1 \\
5\% & 18\% & 5.5 \\
3\% & 12\% & 8.5 \\
2\% & 8\% & 12.5 \\
\bottomrule
\end{tabular}
\end{table}

\subsection{Surveillance Ceiling Validation}

We test Theorem~\ref{thm:ceiling}'s prediction that false positive rates should increase by a larger factor than average factor counts (\emph{superlinear amplification}) using COMPAS data stratified by race. Table~\ref{tab:fpr_amplification} shows that a $1.8\times$ ratio in mean factor count produces a $2.9\times$ ratio in false positive rates, consistent with the theorem's prediction. This is an illustrative empirical check, not a causal identification of surveillance effects.

\begin{table}[t]
\centering

\caption{FPR amplification in COMPAS violent recidivism predictions. The FPR ratio ($2.9\times$) exceeds the mean factor ratio ($1.8\times$), illustrating superlinear amplification.}

\label{tab:fpr_amplification}
\begin{tabular}{lccc}
\toprule
Group & Mean Risk Factors & FPR (Violent) & Ratio \\
\midrule
White defendants & 1.96 & 4.9\% & 1.0$\times$ \\
Black defendants & 3.54 & 14.2\% & 2.9$\times$ \\
\bottomrule
\end{tabular}
\end{table}

\section{The Counterfactual Problem}

\textbf{Objection:} We estimate risk using observed outcomes among released defendants. If detention prevents some offenses, the ``true'' rate could be higher, reducing the required LR. This is the strongest objection to our analysis. We offer three responses.

\paragraph{1. The estimand is prospective.}
The decision-relevant quantity for a judge is the probability of violence given release and the information available at the decision point. This is precisely what observed re-arrest rates among released defendants estimate. Applying the tool to a broader population requires transportability assumptions. The released population is plausibly lower-risk (since judges selectively detain the most dangerous), meaning the true population base rate may be \emph{higher} than observed. By Theorem~\ref{thm:wall}, a higher base rate \emph{lowers} the required LR, making our analysis more conservative.

\paragraph{2. Incapacitation effects are modest.}
Dobbie, Goldin, and Yang \cite{Dobbie2018}, exploiting quasi-random judge assignment, found releasing marginal detainees increased violent crime by about 0.5 percentage points.

\begin{tcolorbox}[colback=gray!5!white,colframe=gray!60!black,width=\columnwidth]
\small
\textbf{Robustness check:} Even doubling the base rate from 3\% to 6\% reduces the required LR from 32 to approximately 16, still well above the empirical range of 2--6.
\end{tcolorbox}

{
\paragraph{3. A dilemma for tool proponents.}
If detained defendants are vastly more dangerous than released defendants, judges are already successfully identifying high-risk individuals \emph{without} the algorithm.
If the algorithm cannot outperform judicial intuition on the released population (where outcomes are observable), why trust its predictions on the detained population (where they are not)?
Either the tool is informationally redundant given the LR wall, or it is inaccurate because violent re-offense is rare among those it is applied to.}

\section{Policy Implications}

\subsection{Number Needed to Detain}

We adapt the Number Needed to Treat (NNT) from clinical epidemiology \cite{Altman1994}:
$\text{NNT} = 1/(\text{absolute risk reduction})$.

\begin{definition}
The \textbf{Number Needed to Detain (NND)} is $\text{NND} = 1/\text{PPV}$: the expected number of flagged defendants detained for each violent offense prevented (assuming detention prevents the flagged offense).
\end{definition}

This translates model performance into a concrete policy tradeoff: \emph{how many people are detained to prevent one violent offense}. While NNT ranges from 3--100 for effective medical interventions \cite{Laupacis1988}, current NND values of 6--9 mean that for each prevented offense, between 5 and 8 of the detained defendants would not have offended. Table~\ref{tab:nnd} shows NND across base rates.

\begin{table}[t]
\centering

\caption{NND at current instrument performance ($\text{LR} = 4$). NND = $1/\text{PPV}$: defendants detained per prevented offense.}

\label{tab:nnd}
\begin{tabular}{lccc}
\toprule
Base Rate & LR & PPV & NND \\
\midrule
5\% & 4 & 17\% & 6 \\
3\% & 4 & 11\% & 9 \\
2\% & 4 & 8\% & 13 \\
1\% & 4 & 4\% & 25 \\
\bottomrule
\end{tabular}
\end{table}

\subsection{Legal Standards Comparison}

We use PPV targets as \emph{analogical benchmarks} for evidentiary confidence, not legal claims that courts require $\text{PPV} \geq X$.

\paragraph{Why 50\% appears in this paper.}
The 50\% threshold (preponderance of evidence, ``more likely than not'') is a widely understandable reference point. We do not recommend it as a fixed threshold rule for all detention decisions. Pretrial detention involves a distinct legal framework from civil commitment: standards vary by jurisdiction, and dangerousness determinations are not typically expressed as explicit probabilities. The federal Bail Reform Act (18~U.S.C.\ \S~3142) requires a finding that ``no condition or combination of conditions will reasonably assure \dots the safety of any other person and the community,'' with dangerousness established by ``clear and convincing evidence'' \cite{BailReform1984}, a standard typically understood as substantially exceeding 50\%. In a survey of all federal judges, McCauliff \cite{McCauliff1982} found that judges quantified ``preponderance'' at a mean of approximately 55\% and ``clear and convincing'' at approximately 75\%, with low variance across respondents. Stevenson and Mayson \cite{StevensonMayson2022}, in the most sustained analysis of what risk level justifies pretrial detention, concluded through a cost-benefit framework that the threshold required on consequentialist grounds is substantially higher than 50\%, suggesting that our benchmark is conservative rather than demanding. Our contribution is the structural relationship between base rates, LR, and PPV, not one normative cutoff. Readers may apply the framework at any threshold they consider appropriate.

Table~\ref{tab:legal} shows that current ``High Risk for Violence'' designations achieve PPV of 8--17\%, which falls well below a ``more likely than not'' benchmark. We do not claim this is unconstitutional; we note that it reveals a gap between the confidence implied by a ``high-risk'' label and the confidence the data supports.

\begin{table}[t]
\centering

\caption{Legal evidentiary benchmarks (used as analogies, not doctrinal requirements) vs.\ achieved PPV for ``High Violence Risk'' flags. PPV = share of flagged cases that are true positives.}

\label{tab:legal}
\begin{tabular}{lc}
\toprule
Standard & Approximate Probability \\
\midrule
Beyond reasonable doubt & $> 95\%$ \\
Clear and convincing evidence & $> 75\%$ \\
Preponderance (``more likely than not'') & $> 50\%$ \\
\midrule
\multicolumn{2}{l}{\textit{Achieved PPV at current performance:}} \\
$\pi = 5\%$, LR $= 4$ & 17\% \\
$\pi = 3\%$, LR $= 4$ & 11\% \\
$\pi = 2\%$, LR $= 4$ & 8\% \\
\bottomrule
\end{tabular}
\end{table}

\subsection{Uncertainty Labels on Risk Reports}
\label{sec:transparency}

If jurisdictions continue to use these tools, decision-makers should see how often a ``high-risk'' label is correct. We propose that risk outputs used in detention contexts include explicit uncertainty communication:

\begin{center}
\begin{tcolorbox}[colback=red!5!white,colframe=red!75!black,width=0.9\columnwidth,title=\textbf{STATISTICAL CONTEXT}]
\small
\textbf{This ``High Violence Risk'' flag does NOT mean the defendant will be violent.}

At current instrument performance ($\text{LR} \approx 4$, base rate $\approx 3\%$):
\begin{itemize}
    \item This flag is correct approximately \textbf{1 in 9 times} (11\% PPV).
    \item To prevent one violent offense, \textbf{9} flagged defendants would be detained.
    \item This confidence level (11\%) is \textbf{well below} a ``more likely than not'' benchmark ($>$50\%).
\end{itemize}
\end{tcolorbox}
\end{center}

\section{Discussion}

For rare violent outcomes, current pretrial risk tools do not appear to generate enough signal for a ``high-risk'' label to be correct often enough to support high-confidence individualized detention decisions.

\subsection{Surveillance and Differential Precision}

The Surveillance Ceiling further constrains achievable performance for populations subject to differential policing.
When people who would not commit an offense accumulate risk factors through heightened surveillance, the best achievable precision from those features is reduced.
This effect arises from the data itself, not biased modeling.
The same instrument applied uniformly can yield systematically lower precision for over-policed groups even at equal offense rates.
No threshold adjustment or fairness constraint can recover predictive signal lost when recorded features are distorted by unequal surveillance exposure.

\subsection{What Would Clear the Wall?}

Current instruments rely on criminal history and administrative records, extensively studied features unlikely to yield order-of-magnitude LR gains.
The wall is lower for more common outcomes: instruments predicting ``any re-arrest'' (base rates 20--30\%) can achieve substantially higher PPV.
The constraint is specific to \emph{rare} outcomes like violent felony.

\subsection{Why AUC Misleads}

Many validation studies emphasize AUC, which measures how well a tool ranks people. But AUC does not tell us how often a ``high-risk'' flag is correct, and it does not change when the outcome becomes rarer. A tool can have similar AUC across settings while the practical meaning of a ``high-risk'' flag changes dramatically. Table~\ref{tab:auc} illustrates this disconnect.

\begin{table}[t]
\centering

\caption{A model with the same ranking performance (AUC $\approx 0.70$) can have very different PPV depending on how common the outcome is.}

\label{tab:auc}
\begin{tabular}{lcc}
\toprule
AUC & Base Rate & Approximate PPV \\
\midrule
0.70 & 50\% & $\sim$70\% \\
0.70 & 10\% & $\sim$25\% \\
0.70 & 3\% & $\sim$10\% \\
\bottomrule
\end{tabular}
\end{table}

\section{Limitations}

This work has several important limitations.
First, our theoretical results characterize \emph{structural constraints} on PPV under rare-event base rates and threshold-based decisions. They do not imply that risk tools are uniformly inaccurate or that probabilistic modeling lacks value in all contexts. Our results apply specifically to individualized, high-confidence decisions like preventive detention, where PPV is the most relevant quantity.

Second, the Surveillance Ceiling is derived under a stylized model (count-threshold classifiers with independent binary indicators). While this captures the core logic of additive point-score instruments, real-world tools may involve correlated features and nonlinear interactions. As discussed in Section~5, the directional conclusion is likely robust, but quantitative magnitudes are assumption-dependent.

Third, our empirical illustration relies on one publicly available dataset. The results are illustrative rather than nationally representative. Fourth, ``violent recidivism'' is measured via re-arrest, which may conflate behavior with surveillance intensity.

Fifth, we focus on threshold-based metrics. Other evaluative frameworks (ranking-based, population-level) may raise distinct questions. Our results do not rule out predictive models for non-liberty-depriving interventions or aggregate planning.

Sixth, our analysis is static and cross-sectional. It does not model temporal feedback loops in which predictions influence policing, which generates future training data. Modeling such dynamics \cite{Lum2016,Ensign2018} is important future work.

Finally, our base rates are derived from released defendants. Incapacitation effects may raise the ``true'' rate, but empirical evidence suggests these effects are too modest to bridge the LR gap (Section~7).

\section{Ethical Considerations}

This research uses exclusively de-identified, publicly available data.

Our findings connect to broader critiques of automated risk assessment. Green \cite{Green2020} argues that risk assessments embed existing institutional logics and are limited as instruments of reform. Citron and Pasquale \cite{Citron2014} raise due process and contestability concerns about automated predictions affecting liberty. Starr \cite{Starr2014} questions whether group-based statistics can legitimately ground individualized punishment. Our results provide quantitative support: the Likelihood Ratio Wall shows that the confidence carried by ``high-risk'' flags may be far lower than their high-stakes use requires.

Our findings also carry downstream risks. By showing that current features are insufficient for high-precision detention decisions, we risk incentivizing more invasive data collection. We explicitly oppose such expansion. We emphasize transparency about uncertainty rather than expansion of surveillance.

\section{Conclusion}

In settings where violent re-arrest is rare, a ``high-risk for violence'' label can sound far more certain than the data support.
At 2--5\% base rates, 50\% PPV requires $\text{LR} \geq 19$--49.
Current instruments achieve $\text{LR} \approx 2$--6, falling short by factors of 3--25.

This gap is structural.
Recalibration cannot close it (Theorem~\ref{thm:recal}).
The Surveillance Ceiling (Theorem~\ref{thm:ceiling}) compounds the problem: for over-policed populations, differential factor accumulation degrades maximum achievable precision.
Our empirical validation confirms superlinear FPR amplification: a $1.8\times$ factor-count ratio produces a $2.9\times$ FPR ratio.
At current performance, roughly 7--10 defendants are detained per prevented violent offense.
Whether this is acceptable is a policy judgment, but one that should be made with accurate information.

\paragraph{Recommendations.}
\begin{enumerate}
\item Report LR, PPV, and NND at operational thresholds, not just AUC.
\item Do not use ``High Risk for Violence'' labels to support high-confidence detention decisions when PPV is below 20\%; require explicit uncertainty disclosures if such labels are retained.
\item Report performance and NND stratified by demographic group.
\item Include uncertainty communication in outputs presented to judges (Section~\ref{sec:transparency}).
\end{enumerate}

The \textbf{99-to-1 Rule} encapsulates our finding: at a 1\% base rate, coin-flip reliability requires a roughly 99-fold signal advantage.
No evidence suggests such signals exist in currently deployed criminal history data.
Until they are found---or their absence acknowledged---we should not mistake statistical noise for certainty in decisions affecting liberty.

\section*{Generative AI Statement}

The author used AI-assisted tools to assist with grammar, fluency, and LaTeX formatting. No AI system contributed to the formulation of theoretical arguments, proofs, interpretations, or empirical analyses.

\begin{acks}
The author acknowledges the support of the Natural Sciences and Engineering Research Council of Canada (NSERC) [funding reference number RGPIN-2019-04085].
\end{acks}


\bibliographystyle{ACM-Reference-Format}
\bibliography{references}

\appendix

\section{Full Proof of Theorem~\ref{thm:ceiling} (Surveillance Ceiling)}
\label{app:ceiling_proof}

We provide the complete proof expanding the main-text sketch.

Let $S_k = \frac{1}{k}\sum_{i=1}^k X_i$. We assume $S_k$ satisfies a Large Deviations Principle with rate function $I_G(x) = \sup_{\lambda} \{ \lambda x - \Lambda_G(\lambda) \}$, where $\Lambda_G$ is the cumulant generating function for group $G$.

\paragraph{(i) FPR Amplification.}
The false positive rate is $q_G(k) = \Pr(S_k \ge \theta \mid Y=0, G)$. By the G\"artner-Ellis Theorem, $\lim_{k\to\infty} \frac{1}{k} \log q_G(k) = -I_G(\theta)$. Because $p_B > p_A$, we have $I_B(\theta) < I_A(\theta)$ for any $\theta > p_B$:
\[
\lim_{k\to\infty} \frac{1}{k} \log \frac{q_B(k)}{q_A(k)} = I_A(\theta) - I_B(\theta) > 0.
\]
Under i.i.d.\ Bernoulli, $I_G(\theta) = D(\theta \| p_G) = \theta \log \frac{\theta}{p_G} + (1-\theta) \log \frac{1-\theta}{1-p_G}$.
Since $\theta > p_B > p_A$, $D(\theta \| p_A) > D(\theta \| p_B)$, confirming exponential divergence. This holds more generally under dependence structures permitting a differentiable cumulant generating function.

\paragraph{(ii) PPV Ceiling.}
Since sensitivity $s(k)$ is determined by the positive-class distribution (identical across groups), $\text{LR}_G(k) = s(k)/q_G(k)$. The shift $p_A \to p_B$ increases $q_B$ relative to $q_A$, so $\text{LR}_B(k) < \text{LR}_A(k)$. Because PPV is strictly increasing in LR for fixed prior odds (Theorem~\ref{thm:wall}), $\text{PPV}_B < \text{PPV}_A$ at every threshold. Taking the supremum: $\sup_m \text{LR}_B(m) \le \sup_m \text{LR}_A(m)$. \qed

\end{document}